\documentclass[aps,prd,onecolumn,superscriptaddress,nofootinbib,floatfix,longbibliography]{revtex4-2}

\usepackage{amsmath,amssymb,amsfonts,bm,mathtools,mathrsfs}
\usepackage{amsthm}
\newtheorem{proposition}{Proposition}
\usepackage{graphicx}
\usepackage{booktabs}
\usepackage{hyperref}
\usepackage{xcolor}
\usepackage{microtype}
\usepackage{tikz}
\usetikzlibrary{arrows.meta,decorations.markings,decorations.pathmorphing,calc,positioning}

\hypersetup{
 colorlinks=true,
 citecolor=blue,
 linkcolor=blue,
 urlcolor=blue,
 pdftitle={A Mixed-State Entanglement Interferometer for Topological Hair on a BTZ Black Hole},
 pdfauthor={Kumar Ghosh}
}

\newcommand{\AdS}{\mathrm{AdS}}
\newcommand{\BTZ}{\mathrm{BTZ}}
\newcommand{\SR}{S_{\mathrm R}}
\newcommand{\EW}{E_{\mathrm W}}
\newcommand{\MI}{I}
\newcommand{\DM}{\Delta_{\mathrm M}}
\newcommand{\cH}{\mathcal H}
\newcommand{\cL}{\mathcal L}
\newcommand{\cC}{\mathcal C}
\newcommand{\cD}{\mathcal D}
\newcommand{\dd}{\mathrm d}
\newcommand{\Tr}{\operatorname{Tr}}
\newcommand{\Ei}{\operatorname{Ei}}
\newcommand{\arctanh}{\operatorname{arctanh}}
\newcommand{\arcosh}{\operatorname{arcosh}}
\newcommand{\Lk}{\operatorname{Lk}}
\newcommand{\Wtop}{\mathfrak W}
\newcommand{\Zk}{\mathbb Z_k}
\newcommand{\cP}{\mathcal P}
\newcommand{\cV}{\mathcal V}
\newcommand{\cZ}{\mathcal Z}

\begin{document}


\title{Reading Topological Hair from Black-Hole Entanglement}

\author{Kumar Ghosh}
\email{jb.ghosh@outlook.com}
\affiliation{E.ON Digital Technology, Laatzener Str.~1, 30539 Hannover, Germany}


\begin{abstract}
Black-hole hair can affect boundary mixed-state entanglement through local stress-energy and through topological information invisible to the classical metric. We separate these channels for a Nielsen--Olesen vortex on a nonrotating BTZ black hole. A superselection theorem shows that Shannon sector entropy cancels from the Markov gap, while standard $U(1)$ symmetry-resolved reflected entropy in a thermal $U(1)_k$ CFT is equipartitioned at leading order, excluding a universal $\log|n|$ term in the imbalance-resolved gap. We therefore define a Wilson-threaded reflected moment in a probe $U(1)_k$ Chern--Simons completion coupled to the compact vortex-flux class. In a reflected-replica sector with linking number $\nu$, its normalised phase is $2\pi\kappa pn\nu/k$, where $\kappa\in\mathbb Z$ is the mixed topological coupling; for $\gcd(\kappa\nu,k)=1$, a discrete Fourier transform reconstructs $n\bmod k$. The RT connectivity transition switches the specified linked contour on or off, whereas the independent scale $\ell_\star r_+/L^2=1.128378\ldots$ determines the direction of the vortex-induced shift of that transition. The charged-moment modulus and the ordinary Markov gap remain geometric observables and are computed from a horizon-anchored Einstein--Abelian-Higgs solution with a complete first-variation kernel including the motion of the entanglement-wedge-cross-section endpoints. This phase and modulus separation distinguishes topological hair from gravitational dressing without assigning an unsupported winding-dependent entropy.
\end{abstract}

\maketitle

\section{Introduction}
\label{sec:intro}

The Ryu--Takayanagi relation converts boundary entanglement entropy into an extremal-area observable in an asymptotically anti-de Sitter spacetime~\cite{RyuTakayanagi2006a,RyuTakayanagi2006b}. Mixed states require finer probes. The entanglement-wedge cross section was proposed as the holographic dual of entanglement of purification~\cite{TakayanagiUmemoto2018,Nguyen2018}, and the canonical purification gives the reflected entropy~\cite{DuttaFaulkner2021}
\begin{equation}
 \SR(A:B)=\frac{2\,\mathrm{Length}(\Sigma_{A:B})}{4G_N}
 =2\EW(A:B)+O(G_N^0).
 \label{eq:DF}
\end{equation}
The Markov gap
\begin{equation}
 \DM(A:B)=\SR(A:B)-\MI(A:B),
 \qquad
 \MI(A:B)=S_A+S_B-S_{AB},
 \label{eq:markov-def}
\end{equation}
controls a canonical recovery problem and, for geometric reflected entropy in pure $\AdS_3$ gravity, obeys a lower bound set by the endpoints of the entanglement-wedge cross section~\cite{HaydenParrikarSorce2021}. In two-dimensional topological phases, an optimised Markov gap on adjacent regions detects ungappable edge structure through the universal value $(c_+/3)\log2$~\cite{SivaEtAl2022}. These results make the Markov gap a natural meeting point of holography, topological matter, and quantum information.

A vortex-dressed BTZ black hole provides a particularly sharp test because one object carries two kinds of information. Its integer winding fixes the quantised magnetic flux, whereas its finite-width stress tensor deforms the exterior metric. Reference~\cite{Ghosh2020} constructed a black hole in a Nielsen--Olesen vortex using a BPS-like reduction of the Einstein--Abelian-Higgs equations. The sharp question is
\begin{center}
 \emph{Can a mixed-state observable distinguish topological hair from the gravitational field of the same vortex?}
\end{center}
Three distinctions are essential. First, the vortex winding $n$, the charge-imbalance sector $q$ of the reflected density matrix, and a topological probe charge $p$ are different labels. Identifying them without a dynamical or topological map leads to an incorrect ``flux-resolved'' observable. Second, standard $U(1)$ symmetry resolution does not automatically produce a winding-dependent entropy. Charged-twist calculations and free-theory numerics show leading-order equipartition of reflected entropy~\cite{BerthiereParez2023}. Third, a same-boundary RT surface and the static EWCS remain on the exterior Riemannian slice; a topological linking phase belongs to the closed defect contour in the Euclidean reflected replica, not to a spatial curve crossing the horizon.

The main result is a phase and modulus separation. For the ordinary density matrix, we prove that flux-sector Shannon information cancels from $\DM$ (Proposition~\ref{prop:cancellation}). For the standard reflected charge imbalance, we derive the finite-temperature charged moments and prove equipartition. We then add the minimal topological probe required for a positive winding readout: a spectator $U(1)_k$ Chern--Simons field coupled to the compact vortex-flux class. In a reflected-replica sector with specified mutual linking $\nu=\Lk(\Gamma_R,\cV_n)$, the normalised probe phase is
\begin{equation}
 \Wtop_p(n)=\exp\!\left[
 \frac{2\pi i}{k}\,\kappa pn\,\nu
 \right],
 \label{eq:intro-link-phase}
\end{equation}
where $\kappa\in\mathbb Z$ is the mixed topological coupling. Its discrete Fourier transform returns $\kappa n\nu\bmod k$ and therefore reconstructs $n\bmod k$ whenever $\gcd(\kappa\nu,k)=1$. The topological probe is evaluated on the fixed Einstein--Abelian-Higgs saddle, so its phase factorises from the geometric response. In contrast, the modulus of the charged moment and the uncharged Markov gap respond to the vortex profile through the exterior metric.

The RT connectivity transition supplies the geometric gate, but it does not by itself determine a linking class. We specify a reflected-replica sector in which the connected contour has linking number $\nu_0$ with the topological flux defect. The connected EWCS then supports that contour, while the disconnected saddle has no nontrivial reflected probe contour. Separately, the vortex-induced shift of the transition changes sign at
\begin{equation}
 \frac{\ell_\star r_+}{L^2}=1.128378\ldots.
 \label{eq:ell-star-intro}
\end{equation}
Thus $s=s_c(\ell,\alpha)$ is the on/off gate, whereas $\ell_\star$ determines whether a positive vortex source moves that gate to smaller or larger separation. This distinction is the quantitative bridge between the geometric and topological channels.

The geometric part is treated without changing the correct content of the previous analysis. We solve the radial Einstein constraint with its integrating factor and a fixed exterior horizon, derive the exact exterior geodesic problem, and obtain source-to-observable kernels for RT entropy, EWCS, reflected entropy, the Markov gap, and the RT connectivity transition, including the displacement of both EWCS endpoints. For the inverse-cube tail selected by the vortex field equations, the transition shift reverses sign at Eq.~\eqref{eq:ell-star-intro}. Section~\ref{sec:background} constructs the horizon-anchored Einstein--Abelian-Higgs background. Section~\ref{sec:topology-filter} proves the superselection cancellation theorem. Section~\ref{sec:standard-resolution} derives the standard charged reflected moments and the equipartition no-go result. Section~\ref{sec:wilson-interferometer} introduces the Wilson-threaded reflected moment and its exact linking phase, and proves the conditional replica-sector gating statement. Sections~\ref{sec:geodesics}--\ref{sec:inverse-cube} retain the complete geometric response calculation. Section~\ref{sec:measurement} gives the condensed-matter interpretation and a concrete measurement protocol.

\section{Horizon-anchored Einstein--Abelian-Higgs background}
\label{sec:background}

\subsection{Vortex ansatz and radial constraint}

We use the standard Einstein--Abelian-Higgs action in $2+1$ dimensions,
\begin{align}
 I={}&\frac{1}{16\pi G_N}\int\dd^3x\sqrt{-g}
 \left(R+\frac{2}{L^2}\right)
 \nonumber\\
 &+\int\dd^3x\sqrt{-g}\left[
 -\frac14F_{\mu\nu}F^{\mu\nu}
 -|D_\mu\Phi|^2
 -\frac{\lambda}{4}\left(|\Phi|^2-\eta^2\right)^2
 \right].
 \label{eq:action}
\end{align}
For a static rotationally symmetric vortex,
\begin{equation}
 \Phi=\eta\chi(r)e^{in\theta},
 \qquad
 A_\theta=\frac{n-P(r)}{e},
 \label{eq:vortex-ansatz}
\end{equation}
with
\begin{equation}
 \chi(0)=0,\quad \chi(\infty)=1,
 \qquad
 P(0)=n,\quad P(\infty)=0.
 \label{eq:vortex-bc}
\end{equation}
The magnetic flux is
\begin{equation}
 \Phi_B=\int F=\frac{2\pi n}{e},
 \label{eq:flux}
\end{equation}
up to an orientation sign.

In the BPS-like branch used in Ref.~\cite{Ghosh2020}, the nonrotating radial equations are
\begin{align}
 P'&=-\sqrt2\,\eta e\,fP\chi,
 \label{eq:bps-P}\\
 \chi'&=-\frac{\sqrt\lambda}{\sqrt2}\eta f(\chi^2-1),
 \label{eq:bps-chi}\\
 F'(r)+\mathcal P_n(r)F(r)&=\frac{2r}{L^2},
 \qquad F=f^{-2},
 \label{eq:radial-constraint}
\end{align}
with
\begin{equation}
 \mathcal P_n(r)=
 \frac{\eta}{\sqrt2 r}
 \left[\eta r^2(\chi')^2+\frac{(P')^2}{\eta e}\right].
 \label{eq:source-P}
\end{equation}
The other BPS orientation reverses the sign of $\mathcal P_n$.

The exterior metric is
\begin{equation}
 \dd s^2=-F_n(r)\dd t^2+\frac{\dd r^2}{F_n(r)}
 +\frac{r^2}{L^2}\dd x^2,
 \qquad x=L\theta,
 \label{eq:metric}
\end{equation}
with a horizon at $r=r_+$.

\subsection{Exact integrating-factor solution}

Equation~\eqref{eq:radial-constraint} is a first-order inhomogeneous equation. Define
\begin{equation}
 \mathcal I_n(r)=\int_{r_+}^{r}\mathcal P_n(v)\,\dd v.
 \label{eq:I-source}
\end{equation}
The unique solution satisfying $F_n(r_+)=0$ is
\begin{equation}
 \boxed{
 F_n(r)=e^{-\mathcal I_n(r)}
 \int_{r_+}^{r}\frac{2u}{L^2}e^{\mathcal I_n(u)}\,\dd u.}
 \label{eq:F-exact-general}
\end{equation}
Indeed,
\begin{equation}
 \frac{\dd}{\dd r}\left(e^{\mathcal I_n}F_n\right)
 =\frac{2r}{L^2}e^{\mathcal I_n},
\end{equation}
which proves Eq.~\eqref{eq:F-exact-general}. A purely multiplicative ansatz $F_n=H_nF_0$ is not the general solution because the integrating factor also weights the inhomogeneous source inside the radial integral.

The horizon derivative follows directly:
\begin{equation}
 F_n'(r_+)=\frac{2r_+}{L^2},
 \qquad
 T_n=\frac{r_+}{2\pi L^2}=T_{\BTZ},
 \label{eq:horizon-derivative}
\end{equation}
at fixed $r_+$, so the horizon entropy $S_{\rm BH}=2\pi r_+/(4G_N)$ is unchanged. A temperature shift can occur in a different ensemble if the renormalised mass is held fixed and the horizon moves; it does not occur for the horizon-anchored solution of Eq.~\eqref{eq:radial-constraint}.

\subsection{Weak source and its asymptotic tail}

The large-radius fields
\begin{equation}
 \chi(r)=1+\frac{\chi_1(n)}{r}+O(r^{-2}),
 \qquad
 P(r)=\frac{P_1(n)}{r}+O(r^{-2})
 \label{eq:field-asymptotics}
\end{equation}
give
\begin{equation}
 \mathcal P_n(r)=
 \frac{\eta^2\chi_1(n)^2}{\sqrt2\,r^3}+O(r^{-4}).
 \label{eq:P-asymptotic}
\end{equation}
Thus the natural leading source is inverse cubic. Write
\begin{equation}
 \mathcal P_n(r)=\alpha_n\pi_n(r),
 \qquad |\alpha_n|\ll1,
 \label{eq:weak-source}
\end{equation}
and expand
\begin{equation}
 F_n(r)=F_0(r)+\alpha_n f_n(r)+O(\alpha_n^2),
 \qquad
 F_0(r)=\frac{r^2-r_+^2}{L^2}.
 \label{eq:F-linear}
\end{equation}
At first order,
\begin{equation}
 f_n'(r)=-\pi_n(r)F_0(r),
 \qquad f_n(r_+)=0,
\end{equation}
so
\begin{equation}
 \boxed{
 f_n(r)=-\int_{r_+}^{r}\pi_n(u)F_0(u)\,\dd u.}
 \label{eq:source-to-metric}
\end{equation}
This nonlocal source-to-metric map is the first ingredient in the entanglement response.

\subsection{Exactly solvable inverse-cube representative}

To isolate the leading tail we use
\begin{equation}
 \mathcal P_\alpha(r)=\alpha\frac{r_+^2}{r^3}.
 \label{eq:inverse-cube-source}
\end{equation}
The coefficient $\alpha$ is dimensionless and contains the BPS orientation and the winding-dependent amplitude. We do not assume $\alpha\propto n$. The exact integrating-factor exponent is
\begin{equation}
 \mathcal I_\alpha(r)=\frac{\alpha}{2}
 \left(1-\frac{r_+^2}{r^2}\right).
\end{equation}
Let $a=\alpha r_+^2/2$ and $\mathcal G_a(t)=t e^{-a/t}+a\Ei(-a/t)$. Then
\begin{equation}
 \boxed{
 F_\alpha(r)=\frac{e^{a/r^2}}{L^2}
 \left[\mathcal G_a(r^2)-\mathcal G_a(r_+^2)\right].}
 \label{eq:F-alpha-exact}
\end{equation}
Its first derivative in $\alpha$ is
\begin{equation}
 f_\star(r)=-\frac{r_+^2}{L^2}
 \left[
 \log\frac{r}{r_+}+\frac{r_+^2}{2r^2}-\frac12
 \right],
 \label{eq:f-star}
\end{equation}
and the relative response
\begin{equation}
 h_\star(r)\equiv\frac{f_\star(r)}{F_0(r)}
 =-\frac{r_+^2}{r^2-r_+^2}
 \left[
 \log\frac{r}{r_+}+\frac{r_+^2}{2r^2}-\frac12
 \right]
 \label{eq:h-star}
\end{equation}
is regular and vanishes at the horizon, $h_\star(r)=-(r-r_+)/(2r_+)+O((r-r_+)^2)$. The large-radius expansion
\begin{equation}
 F_\alpha(r)=\frac{r^2-r_+^2}{L^2}
 -\alpha\frac{r_+^2}{L^2}\log\frac{r}{r_+}
 +\alpha\frac{r_+^2}{2L^2}+O(r^{-2})
 \label{eq:F-log-asymptotic}
\end{equation}
requires the standard renormalised treatment of asymptotic charges in three-dimensional gravity with long-range matter~\cite{MartinezTeitelboimZanelli2000}. Panel (a) of Fig.~\ref{fig:geometric-spine} displays $h_\star$.

\section{Unresolved Markov gap: a superselection filter}
\label{sec:topology-filter}

\subsection{Canonical purification and the algebra centre}

For a bipartite density matrix
\begin{equation}
 \rho_{AB}=\sum_i\lambda_i|i\rangle\langle i|,
\end{equation}
its canonical purification is
\begin{equation}
 |\sqrt{\rho_{AB}}\rangle
 =\sum_i\sqrt{\lambda_i}\,
 |i\rangle_{AB}|i\rangle_{A^*B^*},
 \label{eq:canonical-purification}
\end{equation}
and $\SR(A:B)=S(AA^*)_{|\sqrt\rho\rangle}$~\cite{DuttaFaulkner2021}. Gauge or superselection constraints induce a nontrivial centre of the local algebras, so a physical bipartition takes the form
\begin{equation}
 \cH_{AB}=\bigoplus_a
 \cH_A^{(a)}\otimes\cH_B^{(a)},
 \label{eq:hilbert-direct-sum}
\end{equation}
where the sector label $a$ is measurable within either $A$ or $B$ alone. This is the sense in which the direct sum is taken \emph{with respect to the algebra centre}~\cite{DonnellyWall2015}: it is stronger than merely $[\rho,\Pi_a]=0$ for projectors on $AB$, and is the standard structure that arises in lattice gauge theory, in the extended Hilbert space of Chern--Simons matter, and in vortex sectors labelled by the flux threading a bipartition.

\subsection{Flux-sector cancellation theorem}

\begin{proposition}[Superselection cancellation]
\label{prop:cancellation}
Let $\cH_{AB}$ decompose as in Eq.~\eqref{eq:hilbert-direct-sum} with the algebra centre labelling the sectors, and let
\begin{equation}
 \rho_{AB}=\bigoplus_a p_a\rho_{AB}^{(a)},
 \qquad p_a\geq0,\ \sum_a p_a=1,\ \Tr\rho_{AB}^{(a)}=1.
 \label{eq:rho-direct-sum}
\end{equation}
Then
\begin{align}
 \SR(A:B)&=H(p)+\sum_a p_a\SR^{(a)}(A:B),
 \label{eq:SR-sector}\\
 \MI(A:B)&=H(p)+\sum_a p_a\MI^{(a)}(A:B),
 \label{eq:I-sector}\\
 \DM(A:B)&=\sum_a p_a\DM^{(a)}(A:B),
 \label{eq:Delta-sector}
\end{align}
where $H(p)=-\sum_a p_a\log p_a$.
\end{proposition}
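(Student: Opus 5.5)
The plan is to reduce everything to the entropy of a block-diagonal state. The basic lemma is: if $\sigma=\bigoplus_a p_a\sigma^{(a)}$ with the blocks supported on mutually orthogonal subspaces and $\Tr\sigma^{(a)}=1$, then $S(\sigma)=H(p)+\sum_a p_aS(\sigma^{(a)})$. This follows directly from the spectral decomposition: the eigenvalues of $\sigma$ are $p_a\lambda^{(a)}_j$, and $-\sum p_a\lambda_j^{(a)}\log(p_a\lambda_j^{(a)})$ splits into the two terms. Everything else consists of checking that each relevant reduced state is block diagonal with the \emph{same} weights $p_a$.

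For the mutual information, Eq.~\eqref{eq:I-sector}, we use that the label $a$ lies in the centre and is measurable within $A$ alone. Tracing out $B$ therefore gives $\rho_A=\bigoplus_a p_a\rho_A^{(a)}$, where $\rho_A^{(a)}=\Tr_{\cH_B^{(a)}}\rho_{AB}^{(a)}$ and the blocks live on orthogonal subspaces of the extended $A$ algebra. The same holds for $\rho_B$, and for $\rho_{AB}$ directly. Applying the lemma to each gives $S_A+S_B-S_{AB}=(1+1-1)H(p)+\sum_ap_a\MI^{(a)}$. The point to stress is that $H(p)$ appears in each of $S_A$, $S_B$ and $S_{AB}$, because the sector label is locally readable on both sides. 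A mere $[\rho,\Pi_a]=0$ would not give this.

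For the reflected entropy, Eq.~\eqref{eq:SR-sector}, we first compute the canonical purification. Since $\sqrt{\rho_{AB}}=\bigoplus_a\sqrt{p_a}\sqrt{\rho_{AB}^{(a)}}$, Eq.~\eqref{eq:canonical-purification} gives
\begin{equation}
 |\sqrt{\rho_{AB}}\rangle=\sum_a\sqrt{p_a}\,|\sqrt{\rho^{(a)}_{AB}}\rangle,
\end{equation}
where each term lies in $\cH_A^{(a)}\otimes\cH_B^{(a)}\otimes\cH_{A^*}^{(a)}\otimes\cH_{B^*}^{(a)}$. Here the doubled space inherits the same central decomposition, with diagonal sectors $a=a^*$ only. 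Tracing over $BB^*$ then gives $\rho_{AA^*}=\bigoplus_ap_a\rho_{AA^*}^{(a)}$. The cross terms $a\neq b$ vanish because the trace over $BB^*$ pairs orthogonal sector subspaces $\cH_B^{(a)}\perp\cH_B^{(b)}$. The lemma then yields $\SR=H(p)+\sum_ap_a\SR^{(a)}$. Subtracting the two results gives Eq.~\eqref{eq:Delta-sector}.

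The main obstacle is the reflected step. There one must check that only a single copy of $H(p)$ survives, and not $2H(p)$ or zero. This hinges on the coherence of the purification across sectors: the label $a$ in $AA^*$ is perfectly correlated with $a$ in $BB^*$, so $\rho_{AA^*}$ is a classical mixture over $a$ rather than a product over independent labels $a$ and $a^*$. I would make this explicit by writing $|\sqrt{\rho}\rangle$ as a state of the form $\sum_a\sqrt{p_a}|a\rangle_{AA^*}|a\rangle_{BB^*}\otimes|\psi_a\rangle$, with the sector labels represented by central (edge) registers. Viewed this way, the central register contributes exactly $H(p)$ to $S(AA^*)$, just as it does to each of $S_A$, $S_B$ and $S_{AB}$.
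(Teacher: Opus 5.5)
Your proposal is correct and follows essentially the same route as the paper. You apply the block-diagonal entropy identity $S=H(p)+\sum_a p_a S^{(a)}$ to $\rho_A$, $\rho_B$ and $\rho_{AB}$, and to $\rho_{AA^*}$ after showing that the canonical purification splits as $\sum_a\sqrt{p_a}\,|\sqrt{\rho^{(a)}_{AB}}\rangle$ with sector-diagonal support, so that $\Tr_{BB^*}$ removes the cross terms; your explicit cross-term argument and your explanation of why exactly one copy of $H(p)$ survives in $S(AA^*)$ usefully spell out steps the paper states more tersely.
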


\begin{proof}
Diagonalising each block gives $\rho_{AB}=\sum_{a,i_a}p_a\lambda_{i_a}^{(a)}|i_a,a\rangle\langle i_a,a|$. The canonical purification therefore decomposes as
\begin{equation}
 |\sqrt\rho\rangle
 =\bigoplus_a\sqrt{p_a}\,
 |\sqrt{\rho^{(a)}}\rangle
 \in\bigoplus_a\cH_A^{(a)}\otimes\cH_B^{(a)}\otimes\cH_{A^*}^{(a)}\otimes\cH_{B^*}^{(a)},
\end{equation}
because sectors are orthogonal on the $AB$ side and the doubled sectors inherit the same block label. Tracing $BB^*$ preserves the direct sum since partial trace of a block on the $A$-side lands in $\cH_A^{(a)}\otimes\cH_{A^*}^{(a)}$, giving $\rho_{AA^*}=\bigoplus_a p_a\rho_{AA^*}^{(a)}$. The entropy of a direct sum equals the Shannon entropy of the block weights plus the weighted average of the block entropies, which proves Eq.~\eqref{eq:SR-sector}. The same identity applied to $\rho_A$, $\rho_B$, $\rho_{AB}$ (which each inherit the direct-sum structure thanks to Eq.~\eqref{eq:hilbert-direct-sum}) proves Eq.~\eqref{eq:I-sector}, and subtraction proves Eq.~\eqref{eq:Delta-sector}.
\end{proof}

We verified Prop.~\ref{prop:cancellation} numerically to machine precision for random block-diagonal states with $\dim\cH_A^{(a)}=\dim\cH_B^{(a)}=2$ per sector. The uncertainty of the flux sector therefore cannot by itself produce a Markov gap.

\subsection{Topological fixed-point contribution}

For the Chern--Simons states evaluated in Ref.~\cite{Berthiere2021}, the equality is stronger:
\begin{equation}
 \SR^{\rm top}(A:B)=\MI^{\rm top}(A:B).
 \label{eq:CS-equality}
\end{equation}
For adjacent regions, both quantities contain
\begin{equation}
 H(p)+2\sum_a p_a\log\frac{d_a}{\cD},
 \qquad
 \cD=\left(\sum_a d_a^2\right)^{1/2},
 \label{eq:CS-term}
\end{equation}
where $d_a$ is the quantum dimension. Hence $\DM^{\rm top}=0$ for topological fixed-point states, including Abelian sectors with $d_a=1$. If the leading semiclassical state factorises as a product of a geometric CFT part and a decoupled topological CS part (which is the structure realised by the spectator completion of Sec.~\ref{sec:wilson-interferometer}), additivity gives
\begin{equation}
 \DM=\DM^{\rm geom}+\DM^{\rm top}=\DM^{\rm geom}.
 \label{eq:gap-filter}
\end{equation}
This is the precise sense in which the Markov gap acts as a topology filter.

\section{Standard symmetry resolution and its limitation}
\label{sec:standard-resolution}

Proposition~\ref{prop:cancellation} concerns \emph{classical} uncertainty over global superselection sectors. Standard symmetry-resolved reflected entropy is a different construction. For a state with $[\rho_{AB},Q_A+Q_B]=0$, the canonical purification of $\rho_{AB}^m$ has $\rho_{AA^*}^{(m)}$ commuting with the charge imbalance~\cite{BerthiereParez2023}
\begin{equation}
 \mathcal Q_A=Q_A\otimes\mathbf 1_{A^*}-\mathbf 1_A\otimes Q_A^{T}.
 \label{eq:imbalance-charge}
\end{equation}
The charged reflected moments are
\begin{equation}
 Z^{R}_{m,r}(\mu)=
 \Tr\!\left[\big(\rho_{AA^*}^{(m)}\big)^r
 e^{i\mu\mathcal Q_A}\right],
 \label{eq:charged-reflected-moment}
\end{equation}
where $r$ is the reflected R\'enyi index. Their Fourier components and normalised sector entropies are
\begin{align}
 \cZ^{R}_{m,r}(q)&=\int_{-\pi}^{\pi}\frac{\dd\mu}{2\pi}
 e^{-iq\mu}Z^{R}_{m,r}(\mu),
 \label{eq:charged-reflected-fourier}\\
 S^{R}_{m,r}(q)&=\frac{1}{1-r}
 \log\!\left[
 \frac{\cZ^{R}_{m,r}(q)}{\big(\cZ^{R}_{m,1}(q)\big)^r}
 \right].
 \label{eq:resolved-reflected-def}
\end{align}
The charge $q$ is an imbalance between the two copies of $A$. It is not the global vortex winding $n$. In fact,
\begin{equation}
 \langle\mathcal Q_A\rangle_{\sqrt\rho}=0
 \label{eq:imbalance-zero-mean}
\end{equation}
for the canonical purification whenever $Q_A^T=Q_A$, because the two copies have identical one-point functions. A linear phase proportional to the background winding is therefore not generated by Eq.~\eqref{eq:charged-reflected-moment}.

\subsection{Adjacent intervals in a thermal \texorpdfstring{$U(1)_k$}{U(1)k} CFT}
\label{sec:adjacent-thermal}

For two adjacent intervals of lengths $\ell_A$ and $\ell_B$, the charged-twist result of Ref.~\cite{BerthiereParez2023} combines with the $U(1)_k$ Chern--Simons/current-algebra normalisation of Ref.~\cite{ZhaoNortheMeyer2021}\footnote{Ref.~\cite{ZhaoNortheMeyer2021} works with the Chern--Simons level $k$, in which conventions the charged vertex operator has total conformal weight $h_\alpha+\bar h_\alpha=k(\alpha/2\pi)^2/2$; Ref.~\cite{BerthiereParez2023} works with a Luttinger parameter $K$ and holomorphic weight $h_\alpha=K(\alpha/2\pi)^2/2$. The two conventions match under $K=k/2$. All formulae in this section are written in the CS convention.} to give
\begin{equation}
 \frac{Z^{R}_{m,r}(\mu)}{Z^{R}_{m,r}(0)}
 =\exp\!\left[-\frac{k}{r}
 \left(\frac{\mu}{2\pi}\right)^2
 \Lambda_{AB}\right]
 \label{eq:charged-reflected-gaussian}
\end{equation}
at leading order in the CFT scaling limit. This follows from
\begin{equation}
 Z^R_{m,r}(\alpha)/Z^R_{m,r}(0)
 =\left(\frac{\ell_A\ell_B}{\ell_A+\ell_B}\right)^{-4h_\alpha/r}
\end{equation}
of Ref.~\cite{BerthiereParez2023}, with $h_\alpha=(k/4)(\alpha/2\pi)^2$ in the CS convention. At temperature $\beta^{-1}$,
\begin{equation}
 \Lambda_{AB}^{(\beta)}=
 \log\!\left[
 \frac{\beta}{\pi\epsilon}
 \frac{
 \sinh(\pi\ell_A/\beta)\sinh(\pi\ell_B/\beta)
 }{
 \sinh(\pi(\ell_A+\ell_B)/\beta)
 }
 \right].
 \label{eq:Lambda-beta}
\end{equation}
The zero-temperature result follows by $\beta\rightarrow\infty$. In the adjacent holographic geometry, $\Lambda_{AB}$ is the renormalised EWCS length in units of $L$, up to the conventional additive cutoff constant.

For $k\Lambda_{AB}\gg1$, the compact Fourier transform can be replaced by its Gaussian saddle,
\begin{equation}
 P_r(q)=\frac{\cZ^{R}_{m,r}(q)}{Z^{R}_{m,r}(0)}
 =\sqrt{\frac{\pi r}{k\Lambda_{AB}}}
 \exp\!\left[-\frac{\pi^2r q^2}{k\Lambda_{AB}}\right]
 +O(e^{-k\Lambda_{AB}}).
 \label{eq:reflected-sector-probability}
\end{equation}
Substitution into Eq.~\eqref{eq:resolved-reflected-def} gives
\begin{equation}
 S^{R}_{m,r}(q)=S^{R}_{m,r}
 -\frac12\log\!\left(\frac{k\Lambda_{AB}}{\pi}\right)
 +\frac12\frac{\log r}{1-r}
 +O(\Lambda_{AB}^{-1}),
 \label{eq:reflected-equipartition-renyi}
\end{equation}
and
\begin{equation}
 S^{R}(q)=S^{R}
 -\frac12\log\!\left(\frac{k\Lambda_{AB}}{\pi}\right)
 -\frac12+O(\Lambda_{AB}^{-1}).
 \label{eq:reflected-equipartition-vn}
\end{equation}
The $q$-dependence in $\log P_r$ and $r\log P_1$ cancels exactly (see App.~\ref{app:equipartition}). This is equipartition of reflected entropy. Charge dependence first appears through subleading terms of order $(q/\log\ell)^2$ in the free-boson and free-fermion analyses~\cite{BerthiereParez2023,KusukiEtAl2023}.

Equations~\eqref{eq:reflected-equipartition-renyi} and~\eqref{eq:reflected-equipartition-vn} rule out a proposed universal term of the form $(c_0/3)\log|n|$ in a standard Abelian symmetry-resolved reflected entropy: no such term is produced by the charged replica and any such term is inconsistent with leading-order equipartition. Panel (a) of Fig.~\ref{fig:interferometer} shows the resulting Gaussian modulus for three thermal interval scales. This is a strong negative result. It is what forces the topological readout onto a different observable.

\section{Wilson-threaded reflected moments: a topological-hair interferometer}
\label{sec:wilson-interferometer}

\subsection{Spectator Chern--Simons completion}

The limitation above identifies the minimal additional structure needed for a positive topological readout. We retain the Einstein--Abelian-Higgs saddle and add a spectator $U(1)_k$ Chern--Simons field $a$ as a probe of the compact vortex-flux class. Define the dimensionless compact connection $\widehat A=eA$, so that Eq.~\eqref{eq:flux} becomes
\begin{equation}
 \frac{1}{2\pi}\int_{\Sigma}\dd\widehat A=n.
 \label{eq:compact-flux}
\end{equation}
Equivalently, the phase of the Higgs condensate defines the conserved distributional current
\begin{equation}
 j_{\rm v}^{\mu}=\frac{1}{2\pi}\epsilon^{\mu\nu\rho}
 \partial_\nu\partial_\rho\arg\Phi,
 \qquad
 \int_{\Sigma}j_{\rm v}^{0}\,\dd^2x=n.
 \label{eq:vortex-current}
\end{equation}
The Euclidean probe action is
\begin{equation}
 I^{E}_{\rm top}[a;\widehat A]
 =\frac{ik}{4\pi}\int_{\mathcal M}a\wedge\dd a
 +\frac{i\kappa}{2\pi}\int_{\mathcal M}a\wedge\dd\widehat A,
 \qquad k\in\mathbb Z_{>0},\quad \kappa\in\mathbb Z.
 \label{eq:topological-completion}
\end{equation}
For a probe contour disjoint from the smooth core, the flux distribution may be deformed within its fixed topological class to $\dd\widehat A=2\pi n J_{\cV_n}$, where $J_{\cV_n}$ is the Poincar\'e-dual two-form of the defect support. The mixed term then becomes $i\kappa n\int_{\cV_n}a$, so the spectator line charge is derived from the physical vortex flux and equals $\kappa n\bmod k$. On a replica manifold with boundary, the standard Chern--Simons boundary completion associated with the chosen polarisation is implicit, and the same boundary data are used in all normalised amplitudes. We evaluate Eq.~\eqref{eq:topological-completion} as an interferometric probe on the fixed Einstein--Abelian-Higgs saddle. It therefore contributes no metric stress tensor and does not modify the vortex profiles used in Sec.~\ref{sec:background}. Promoting the mixed term to a fully dynamical Chern--Simons--Higgs theory would require re-solving the coupled matter equations and is not assumed here.

\subsection{Reflected-replica geometry and linking}
\label{sec:linking-geometry}

The smooth Euclidean nonrotating BTZ filling has the topology of a solid torus~\cite{BTZ1992}. The thermal Euclidean-time circle is the contractible meridian and closes smoothly at $r=r_+$, while the spatial angular circle is the noncontractible longitude. The time-reflection-symmetric exterior slice is an annulus bounded by the horizon and the conformal boundary. In particular, the coordinate region $r<r_+$, including the Lorentzian point $r=0$, is not part of the smooth Euclidean exterior.

The Lorentzian vortex fixes the quantised flux label $n$, but the exterior geometry does not select a unique embedding of a defect behind the horizon. In the reflected-replica path integral, the Poincar\'e-dual flux defect $\cV_n$ and its homology class are therefore part of the specified topological boundary condition. Denote
\begin{equation}
 \nu=\Lk(\Gamma_R,\cV_n).
 \label{eq:linking-class}
\end{equation}
For the connected RT saddle, the Dutta--Faulkner canonical purification glues the open EWCS $\Sigma_{A:B}$ to its mirror and produces a closed reflected contour $\Gamma_R$~\cite{DuttaFaulkner2021}. We analyse the linked replica sector in which this contour has prescribed integer linking $\nu_0$ with $\cV_n$. The exterior BTZ geometry alone does not force $\nu_0=1$. In the disconnected saddle the EWCS is absent, so the reflected probe contour associated with it is trivial. Figure~\ref{fig:reflected-linking} is a schematic projection of the unit-linking choice $\nu_0=1$, not a coordinate drawing of the Euclidean black-hole interior.

\begin{figure}[t]
 \centering
 \begin{tikzpicture}[>=Latex, scale=1.05]

 \begin{scope}[xshift=0cm]
  \draw[thick] (0,0) circle (2);
  \draw[thick, dashed] (0,0) circle (0.7);
  \fill[gray!20] (0,0) circle (0.7);
  \filldraw[red] (0,0) circle (0.06);
  \node[red,font=\footnotesize] at (0.25,-0.15) {$\cV_n$};
  \draw[very thick, blue] ({2*cos(30)},{2*sin(30)}) arc (30:110:2) node[midway,above] {$A$};
  \draw[very thick, blue] ({2*cos(-30)},{2*sin(-30)}) arc (-30:-110:2) node[midway,below] {$B$};
  \draw[thick, purple] ({2*cos(30)},{2*sin(30)}) .. controls (0.9,0.3) and (0.9,-0.3) .. ({2*cos(-30)},{2*sin(-30)});
  \draw[thick, purple] ({2*cos(110)},{2*sin(110)}) .. controls (-0.9,0.3) and (-0.9,-0.3) .. ({2*cos(-110)},{2*sin(-110)});
  \draw[very thick, orange] (-0.9,0) -- (0.9,0);
  \node[orange, font=\footnotesize] at (0.0,0.25) {$\Sigma_{A:B}$};
  \draw[thick, dotted, ->] (0.9,0) arc (0:80:0.9);
  \draw[thick, dotted] (0.9,0) arc (0:-80:0.9);
  \draw[thick, dotted] (-0.9,0) arc (180:100:0.9);
  \draw[thick, dotted] (-0.9,0) arc (180:260:0.9);
  \node[font=\footnotesize] at (1.3,0.55) {$\Gamma_R$};
  \node[below, font=\footnotesize] at (0,-2.2) {(a) connected linked sector, $\nu=1$};
 \end{scope}

 \begin{scope}[xshift=5cm]
  \draw[thick] (0,0) circle (2);
  \draw[thick, dashed] (0,0) circle (0.7);
  \fill[gray!20] (0,0) circle (0.7);
  \filldraw[red] (0,0) circle (0.06);
  \node[red,font=\footnotesize] at (0.25,-0.15) {$\cV_n$};
  \draw[very thick, blue] ({2*cos(30)},{2*sin(30)}) arc (30:110:2) node[midway,above] {$A$};
  \draw[very thick, blue] ({2*cos(-30)},{2*sin(-30)}) arc (-30:-110:2) node[midway,below] {$B$};
  \draw[thick, purple] ({2*cos(30)},{2*sin(30)}) .. controls (1.4,1.2) and (0.4,1.6) .. ({2*cos(110)},{2*sin(110)});
  \draw[thick, purple] ({2*cos(-30)},{2*sin(-30)}) .. controls (1.4,-1.2) and (0.4,-1.6) .. ({2*cos(-110)},{2*sin(-110)});
  \node[below, font=\footnotesize] at (0,-2.2) {(b) disconnected};
 \end{scope}
 \end{tikzpicture}
 \caption{Schematic reflected-replica linking sector. The drawing is a projection, not a coordinate representation of the region behind the Euclidean BTZ horizon. The exterior time-symmetric slice is the annulus between the gray horizon and the asymptotic boundary. The red point denotes the projected intersection of the specified topological flux defect $\cV_n$. (a) In the connected unit-linking sector, canonical-purification gluing of the EWCS $\Sigma_{A:B}$ (orange) to its mirror produces a closed reflected contour $\Gamma_R$ (dotted) with $\nu=1$. (b) In the disconnected saddle, the EWCS and its associated reflected probe contour are absent.}
 \label{fig:reflected-linking}
\end{figure}

Denote the probe charge on $\Gamma_R$ by $p$. The normalised Wilson-threaded reflected moment is
\begin{equation}
 \Wtop_p(n)=
 \frac{
 Z^R[W_p(\Gamma_R)W_{\kappa n}(\cV_n)]\,Z^R[1]
 }{
 Z^R[W_p(\Gamma_R)]\,Z^R[W_{\kappa n}(\cV_n)]
 }.
 \label{eq:normalized-wilson-reflected}
\end{equation}
The normalisation is taken at fixed replica boundary state, fixed noncontractible holonomies, and identical framing. It removes the uncharged reflected saddle, local endpoint factors, and the self-linking phases of the individual lines, leaving the mutual braiding contribution in the local linked sector.

\subsection{The linking phase}

\begin{proposition}[Topological-hair phase]
\label{prop:linking}
For the probe theory~\eqref{eq:topological-completion}, evaluated in a fixed reflected-replica topological sector with $\nu=\Lk(\Gamma_R,\cV_n)$,
\begin{equation}
 \boxed{
 \Wtop_p(n)=
 \exp\!\left[
 \frac{2\pi i}{k}\,\kappa pn\nu
 \right].}
 \label{eq:topological-hair-phase}
\end{equation}
\end{proposition}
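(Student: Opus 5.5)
The plan is to reduce the normalised moment~\eqref{eq:normalized-wilson-reflected} to a Gaussian integral over the abelian probe field $a$. This is possible because the action~\eqref{eq:topological-completion} is quadratic in $a$ and the Einstein--Abelian-Higgs saddle is held fixed.

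\textbf{Step 1: source the vortex line.} Following the discussion after Eq.~\eqref{eq:topological-completion}, I first deform $\dd\widehat A$ within its compact class to $2\pi n J_{\cV_n}$. This is legitimate because $\Gamma_R$ is disjoint from the smooth core, and it turns the mixed term into $i\kappa n\oint_{\cV_n}a$. Its exponential is exactly an inserted Wilson line $W_{\kappa n}(\cV_n)$. The full integrand for the numerator is then
\begin{equation*}
 \exp\!\Big[-\tfrac{ik}{4\pi}\textstyle\int a\wedge\dd a+ip\oint_{\Gamma_R}a+i\kappa n\oint_{\cV_n}a\Big].
\end{equation*}
Up to orientation convention, this is a $U(1)_k$ theory with a current $J=p\,\delta_{\Gamma_R}+\kappa n\,\delta_{\cV_n}$.

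\textbf{Step 2: Gaussian integration.} I integrate out $a$ by solving $\frac{k}{2\pi}\dd a=J$ on the fixed reflected-replica manifold. Since the probe is topological and couples to no metric, this is legitimate on any saddle geometry. The on-shell value is
\begin{equation*}
 \exp\!\Big[\tfrac{i\pi}{k}\textstyle\sum_{i,j}q_iq_j\Lk(C_i,C_j)\Big].
\end{equation*}
The diagonal terms are the framing-dependent self-linking phases $\tfrac{i\pi}{k}p^2\Lk(\Gamma_R,\Gamma_R^f)$ and $\tfrac{i\pi}{k}(\kappa n)^2\Lk(\cV_n,\cV_n^f)$. The off-diagonal terms combine into
\begin{equation*}
 \exp\!\big[\tfrac{2\pi i}{k}\,p\,\kappa n\,\nu\big],
 \qquad \nu=\Lk(\Gamma_R,\cV_n)\ \text{from Eq.~\eqref{eq:linking-class}}.
\end{equation*}
The one-loop determinant, the boundary polarisation term, and the sums over flat noncontractible holonomies are independent of the inserted charges, because the normalisation fixes those holonomies and boundary data.

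\textbf{Step 3: normalisation.} Forming the ratio~\eqref{eq:normalized-wilson-reflected}, the common factors cancel: the determinant, the uncharged reflected saddle $Z^R[1]$, and the endpoint factors. $Z^R[W_p(\Gamma_R)]$ cancels the $p^2$ self-linking phase, and $Z^R[W_{\kappa n}(\cV_n)]$ cancels the $(\kappa n)^2$ one. This cancellation requires identical framing, which the normalisation stipulates. Only the mutual term survives, giving Eq.~\eqref{eq:topological-hair-phase}. Finally I would check consistency with compactness: shifting $\kappa n\to\kappa n+k$ or $p\to p+k$ changes the phase by $e^{2\pi i\nu\cdot(\text{integer})}=1$. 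This confirms that the result depends only on charges mod $k$, as asserted after Eq.~\eqref{eq:topological-completion}.

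\textbf{Main obstacle.} The delicate step is justifying that the linking number is well defined and that the holonomy/boundary sector is charge-independent on a replica manifold with boundary and nontrivial topology (a solid torus and its reflected gluing). On such a manifold, $\dd a=\frac{2\pi}{k}J$ requires $J$ to be null-homologous, or at least that the torsion and noncontractible pieces are fixed. I would handle this by first working in the specified topological sector with $\cV_n$ and $\Gamma_R$ bounding Seifert surfaces within the fixed class. There I would compute $\oint_{\Gamma_R}a=\frac{2\pi}{k}\kappa n\,\#(\Gamma_R\cap S_{\cV_n})$ directly. Any remaining noncontractible-holonomy contribution is then absorbed by the fixed-holonomy normalisation.
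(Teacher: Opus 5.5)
Your proposal is correct and takes essentially the same route as the paper's derivation in App.~\ref{app:CS-linking}. Both use a Gaussian saddle $a_*\propto\sum_i q_i\eta_i$ built from Seifert-surface data, which yields mutual-linking cross terms plus framing-dependent self-linking terms; the ratio~\eqref{eq:normalized-wilson-reflected} then cancels the self-linking, determinant and boundary/holonomy factors, and your handling of the solid-torus subtlety (null-homologous curves, fixed noncontractible holonomies) matches the paper's restriction to a contractible neighbourhood or fixed holonomy data.
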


\begin{proof}
This is the standard Abelian Chern--Simons result of Witten and Polychronakos for two Wilson lines~\cite{Witten1989,Polychronakos1989}, with charges $p$ and $\kappa n$. A self-contained derivation, including the fixed-boundary-sector qualification required on a solid torus, is given in App.~\ref{app:CS-linking}. The normalisation in Eq.~\eqref{eq:normalized-wilson-reflected} cancels the one-line factors and leaves the mutual linking phase.
\end{proof}

Equation~\eqref{eq:topological-hair-phase} is independent of $G_N$, the vortex core radius, the Higgs and gauge masses, the black-hole temperature, and the interval lengths within a fixed topological sector. It changes only when the specified defect or probe contour changes linking class, or when the topological phase itself changes. Panel (b) of Fig.~\ref{fig:interferometer} shows the quantised phases for $U(1)_5$ with $\kappa=\nu=1$.

\subsection{Boundary current-algebra check}

The same phase follows from the boundary. The spectator $U(1)_k$ Chern--Simons theory induces a chiral $U(1)_k$ current algebra on the boundary. In the rational normalisation, a vertex operator $V_a$ has conformal weight $h_a=a^2/(2k)$ and the operator product
\begin{equation}
 V_p(z)V_{\kappa n}(0)=z^{\kappa pn/k}\big[V_{p+\kappa n}(0)+\cdots\big].
 \label{eq:boundary-vertex-OPE}
\end{equation}
Transporting $V_p$ once counterclockwise around $V_{\kappa n}$ sends $z\to e^{2\pi i}z$ and produces $M_{p,\kappa n}=e^{2\pi i\kappa pn/k}$. In the reflected replica, the permutation-twist part of the defect is neutral under this spectator current. Dividing by the correlators with either insertion alone cancels the ordinary twist correlator and the local vertex normalisations, so a contour with linking number $\nu$ contributes $M_{p,\kappa n}^{\nu}$, reproducing Eq.~\eqref{eq:topological-hair-phase}.

The winding sector is reconstructed by a discrete Fourier transform over the probe charge,
\begin{equation}
 \cP_a(n;\nu)=\frac1k\sum_{p=0}^{k-1}
 e^{-2\pi i pa/k}\Wtop_p(n)
 =\delta^{(k)}_{a,\kappa n\nu},
 \qquad
 \nu=\Lk(\Gamma_R,\cV_n).
 \label{eq:sector-reconstruction}
\end{equation}
For the minimal choice $\kappa=1$ and unit linking, this returns $n\bmod k$ exactly in the ideal topological sector. More generally it returns $\kappa n\nu\bmod k$ and determines $n$ whenever $\gcd(\kappa\nu,k)=1$; an unlinked contour returns the trivial sector. For a non-Abelian topological completion the normalised Hopf-link amplitude generalises to
\begin{equation}
 \Wtop_p(a)=\frac{S_{pa}S_{00}}{S_{p0}S_{0a}},
 \label{eq:nonabelian-link}
\end{equation}
with $S$ the modular $S$-matrix. Panel (c) of Fig.~\ref{fig:interferometer} shows the ideal sector reconstruction for $\kappa=\nu=1$.

\subsection{Replica-sector gating}
\label{sec:gating}

Combining the specified linking sector of Sec.~\ref{sec:linking-geometry} with the RT competition below gives the following conditional bridge between the two channels of Eq.~\eqref{eq:normalized-wilson-reflected}.

\begin{proposition}[Conditional replica-sector gating]
\label{prop:gating}
Fix a reflected-replica topological sector in which the closed contour associated with the connected EWCS has linking number $\nu_0$ with $\cV_n$. For two equal boundary intervals of length $\ell$ separated by $s$,
\begin{equation}
 \Lk(\Gamma_R,\cV_n)=
 \begin{cases}
 \nu_0,&\cC(\ell,s;\alpha)>0\quad\text{(connected)},\\
 0,&\cC(\ell,s;\alpha)<0\quad\text{(disconnected)},
 \end{cases}
 \label{eq:gating}
\end{equation}
where $\cC$ is the RT connectivity functional of Sec.~\ref{sec:geodesics}. The value $\nu_0$ is topological boundary data and is not fixed by the Euclidean BTZ exterior. For the unit-linking sector $\nu_0=1$, the RT transition $s=s_c(\ell,\alpha)$ switches the interferometric phase on or off. For the inverse-cube source, $\ell_\star r_+/L^2$ instead determines the sign of the first-order displacement of this gate.
\end{proposition}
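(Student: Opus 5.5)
The plan is to split the statement into its two logically independent pieces: the on/off structure of $\Lk(\Gamma_R,\cV_n)$, and the sign statement about the first-order displacement of the gate. For the first piece, I will use the definition of the reflected contour given in Sec.~\ref{sec:linking-geometry}. In the connected phase, $\cC>0$ by definition of the RT connectivity functional (the connected configuration has smaller total length). The entanglement wedge of $AB$ is then connected, and $\Sigma_{A:B}$ is a nonempty minimal cross section. The Dutta--Faulkner gluing doubles it into the closed contour $\Gamma_R$. By the hypothesis fixing the reflected-replica topological sector, this contour has linking number $\nu_0$ with the prescribed defect $\cV_n$. In the disconnected phase, $\cC<0$, the entanglement wedge is a disjoint union and $\EW(A:B)=0$. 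The cross section is empty, so the associated reflected probe contour is empty (or contractible within the replica manifold without crossing $\cV_n$). Its linking number with $\cV_n$ therefore vanishes. Linking number is a homotopy invariant in the complement of $\cV_n$, and within each phase the contour moves continuously with $(\ell,s)$ while staying in the exterior slice. This is disjoint from the defect support, which by construction lies in the prescribed class behind the horizon. Hence $\Lk$ is constant on each side of $s_c$, which gives Eq.~\eqref{eq:gating}.

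Next I will combine this with Prop.~\ref{prop:linking}. Substituting $\nu=\nu_0$ or $\nu=0$ into Eq.~\eqref{eq:topological-hair-phase} gives $\Wtop_p(n)=e^{2\pi i\kappa pn\nu_0/k}$ in the connected phase and $\Wtop_p(n)=1$ in the disconnected phase. For $\nu_0=1$, this is precisely the statement that $s=s_c(\ell,\alpha)$ switches the phase on or off. I will also remark that $\nu_0$ is not determined by the exterior. The Euclidean BTZ exterior is a solid torus whose slice is an annulus, and the defect embedding behind the horizon is free data, as argued in Sec.~\ref{sec:linking-geometry}. Distinct embeddings in different homology classes of the complement give different $\nu_0$ with identical exterior metric.

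For the final clause I will expand $s_c(\ell,\alpha)=s_c^{(0)}(\ell)+\alpha\,\delta s_c(\ell)+O(\alpha^2)$, defined implicitly by $\cC(\ell,s_c;\alpha)=0$. The implicit function theorem then gives $\delta s_c=-\partial_\alpha\cC/\partial_s\cC$, evaluated at $\alpha=0$. Since $\partial_s\cC$ has a fixed sign at the BTZ transition, the sign of $\delta s_c$ is the sign of $-\partial_\alpha\cC$. This in turn is a combination of first-variation geodesic lengths computed with the kernel $h_\star$ of Eq.~\eqref{eq:h-star}. The sign change at $\ell_\star r_+/L^2=1.128378\ldots$ is the result of Sec.~\ref{sec:inverse-cube}, which I will simply invoke.

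I expect the main obstacle to be the disconnected case: making precise that the "trivial" contour has zero linking rather than being undefined. I would handle this by adopting the convention that in the disconnected saddle no Wilson line $W_p(\Gamma_R)$ is inserted, equivalently $\Gamma_R=\emptyset$. With that convention the numerator and denominator of Eq.~\eqref{eq:normalized-wilson-reflected} coincide and $\Wtop_p=1$, consistent with $\nu=0$. A secondary subtlety is the exactly degenerate point $\cC=0$, which the statement deliberately excludes.
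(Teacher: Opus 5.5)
Your proposal is essentially the paper's own argument: the connected saddle gives the Dutta--Faulkner contour, whose linking is $\nu_0$ by the sector hypothesis; the disconnected saddle has no EWCS, so the normalised insertion is trivial; and the $\ell_\star$ clause is delegated to Sec.~\ref{sec:inverse-cube}. One small slip in your implicit-function step: since $\partial_s\cC=-(r_+/L)\bigl[\coth a_{s}+\coth a_{2\ell+s}\bigr]<0$, the sign of $\delta s_c$ is that of $+\partial_\alpha\cC$ (as in Eq.~\eqref{eq:sc-response}), not $-\partial_\alpha\cC$, but this does not move the zero at $\ell_\star$ and so does not affect the proposition.
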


\begin{proof}
When the connected saddle dominates, canonical-purification gluing produces the closed reflected contour, whose homology class is fixed to $\nu_0$ by the chosen replica sector. When the disconnected saddle dominates, the EWCS and its associated reflected probe contour are absent, so the normalised insertion is trivial. The sign reversal of $\dot s_c$ at Eq.~\eqref{eq:ell-star} is a separate geometric result proved in Sec.~\ref{sec:inverse-cube}.
\end{proof}

The falsifiable statement therefore has two parts: crossing $s_c(\ell,\alpha)$ switches the specified linked replica observable, while crossing $\ell_\star$ reverses the direction in which a positive vortex source moves that switching boundary.

\subsection{Phase--modulus factorisation and observable summary}

In the factorised spectator completion, the full Wilson-threaded moment has the schematic form
\begin{equation}
 Z^R_{m,r}(p;n)=
 Z^{R,\rm geom}_{m,r}[F_n]\,
 Z^{R,\rm end}_{m,r}(p)\,
 \exp\!\left[\frac{2\pi i}{k}\kappa pn\nu\right].
 \label{eq:phase-modulus-factorization}
\end{equation}
The first factor contains the vortex backreaction through the blackening function $F_n$. The second contains cutoff and charged-defect endpoint normalisation. The third is the universal topological hair. Equation~\eqref{eq:normalized-wilson-reflected} divides out the first two factors. Conversely, $|Z^R_{m,r}|$ and the uncharged Markov gap retain the geometric response. Table~\ref{tab:observable-separation} summarises the resulting split.

\begin{table}[t]
\centering
\caption{Information carried by the observables used in this work. ``Direct'' means that the quantity remains sensitive when the metric and local vortex profile are held fixed.}
\begin{ruledtabular}
\begin{tabular}{lccc}
Observable & Metric profile & Sector uncertainty & Winding sector \\
\hline
$\SR$ & yes & yes & not uniquely \\
$\DM=\SR-\MI$ & yes & cancels & no direct readout \\
standard $S_R(q)$ & yes & conditioned & equipartitioned at leading order \\
$\arg\Wtop_p(n)$ & no & no & direct for known $\kappa\nu$ \\
\end{tabular}
\end{ruledtabular}
\label{tab:observable-separation}
\end{table}

\section{Exact exterior geodesic problem}
\label{sec:geodesics}

On the time-reflection-symmetric slice of Eq.~\eqref{eq:metric},
\begin{equation}
 \dd s_\Sigma^2=\frac{\dd r^2}{F(r)}+\frac{r^2}{L^2}\dd x^2,
 \qquad r\geq r_+.
 \label{eq:static-slice}
\end{equation}
We work on the universal cover and choose two equal boundary intervals,
\begin{equation}
 A=[0,\ell],
 \qquad
 B=[\ell+s,2\ell+s].
 \label{eq:intervals}
\end{equation}

\subsection{RT geodesic of one interval}

For a curve $r(x)$, the length functional is $\cL=\int\dd x\sqrt{r'^2/F(r)+r^2/L^2}$. Translation invariance in $x$ gives, at the turning point $r=r_t$,
\begin{equation}
 \frac{\dd x}{\dd r}
 =\frac{Lr_t}{r\sqrt{F(r)}\sqrt{r^2-r_t^2}}.
 \label{eq:dxdr}
\end{equation}
An interval of width $w$ therefore satisfies
\begin{equation}
 \boxed{
 \frac{w}{2L}=\int_{r_t}^{\infty}
 \frac{r_t\,\dd r}
 {r\sqrt{F(r)}\sqrt{r^2-r_t^2}}
 =\int_0^1
 \frac{\dd u}{\sqrt{F(r_t/u)}\sqrt{1-u^2}}.}
 \label{eq:width-r}
\end{equation}
Because the integrand is real only where $F>0$, every finite-width turning point obeys $r_t(w)>r_+$: the RT geodesic never crosses the horizon on this static slice. The renormalised length is
\begin{equation}
 \cL_{\rm ren}(w)=2\int_0^1\!\left[
 \frac{r_t}{u^2\sqrt{F(r_t/u)}\sqrt{1-u^2}}
 -\frac{L}{u}\right]\dd u
 -2L\log\frac{2r_t}{L}.
 \label{eq:length-ren}
\end{equation}
The scheme constant cancels from mutual information.

\subsection{RT competition and EWCS}

For the intervals~\eqref{eq:intervals}, define
\begin{equation}
 \cC(\ell,s)=2\cL_{\rm ren}(\ell)
 -\cL_{\rm ren}(s)-\cL_{\rm ren}(2\ell+s).
 \label{eq:C-def}
\end{equation}
The connected RT saddle dominates when $\cC>0$; the transition is $\cC(\ell,s_c)=0$. Reflection symmetry fixes the EWCS to the radial segment between the turning points of the geodesics with widths $s$ and $2\ell+s$:
\begin{equation}
 \boxed{
 W(\ell,s)=\int_{r_t(2\ell+s)}^{r_t(s)}
 \frac{\dd r}{\sqrt{F(r)}}.}
 \label{eq:W-general}
\end{equation}
Both endpoints exceed $r_+$, so $W$ is entirely exterior. With $c_0=3L/(2G_N)$,
\begin{equation}
 \frac{\MI}{c_0}=\frac{\cC}{6L},\quad
 \frac{\SR}{c_0}=\frac{W}{3L},\quad
 \frac{\DM}{c_0}=\frac{2W-\cC}{6L}.
 \label{eq:info-general}
\end{equation}

\subsection{BTZ benchmark}

For $F=F_0$, Eq.~\eqref{eq:rt-BTZ} gives $r_t^{(0)}(w)=r_+\coth a_w$ with $a_w=r_+w/(2L^2)$. The thermal cross-ratio
\begin{equation}
 \zeta=\frac{\sinh^2a_\ell}
 {\sinh a_s\,\sinh a_{2\ell+s}}
 \label{eq:zeta}
\end{equation}
satisfies $\zeta>1$ in the connected phase and yields
\begin{align}
 \frac{\MI_0}{c_0}&=\frac13\log\zeta,
 \label{eq:I-BTZ}\\
 \frac{\SR^{(0)}}{c_0}&=\frac13\log
 \left(1+2\zeta+2\sqrt{\zeta(1+\zeta)}\right),
 \label{eq:SR-BTZ}\\
 \frac{\DM^{(0)}}{c_0}&=\frac13\log\left[
 \frac{1+2\zeta+2\sqrt{\zeta(1+\zeta)}}{\zeta}
 \right].
 \label{eq:Delta-BTZ}
\end{align}
The BTZ critical separation follows from $\zeta=1$: with $x=a_\ell$,
\begin{equation}
 \boxed{
 s_c^{(0)}=\frac{2L^2}{r_+}
 \left[
 \frac12\arcosh(2\cosh(2x)-1)-x
 \right].}
 \label{eq:sc-BTZ}
\end{equation}

\section{Response to an arbitrary weak vortex source}
\label{sec:linear}

Let $\mathcal P(r)=\alpha\pi(r)$ and $F(r;\alpha)=F_0(r)[1+\alpha h_\pi(r)]+O(\alpha^2)$, so that
\begin{equation}
 \boxed{
 h_\pi(r)=-\frac{1}{F_0(r)}
 \int_{r_+}^{r}\pi(u)F_0(u)\,\dd u.}
 \label{eq:source-to-h}
\end{equation}
A dot denotes $\partial_\alpha|_{\alpha=0}$, with the boundary widths held fixed.

\subsection{Turning-point response}

Differentiating Eq.~\eqref{eq:width-r} at fixed $w$ gives, with $\mathcal J_\pi(r_t)=\int_0^1 h_\pi(r_t/u)/[\sqrt{F_0(r_t/u)}\sqrt{1-u^2}]\,\dd u$,
\begin{equation}
 \boxed{
 \dot r_t(w)=-\frac{r_t^{(0)}(w)^2-r_+^2}{2L}
 \mathcal J_\pi(r_t^{(0)}(w)).}
 \label{eq:rt-response}
\end{equation}

\subsection{RT-length response}

With $\dot g_{rr}=-h_\pi(r)/F_0(r)$,
\begin{equation}
 \boxed{
 \dot\cL(w)=-\int_{r_t^{(0)}(w)}^\infty
 \frac{h_\pi(r)}{\sqrt{F_0(r)}}
 \sqrt{1-\frac{r_t^{(0)}(w)^2}{r^2}}\,\dd r,}
 \label{eq:length-response}
\end{equation}
and $\dot\cC=2\dot\cL(\ell)-\dot\cL(s)-\dot\cL(2\ell+s)$.

\subsection{EWCS response with moving endpoints}

Let $r_i=r_t^{(0)}(s)>r_o=r_t^{(0)}(2\ell+s)$. Differentiating Eq.~\eqref{eq:W-general} gives
\begin{equation}
 \boxed{
 \dot W=-\frac12\int_{r_o}^{r_i}
 \frac{h_\pi(r)}{\sqrt{F_0(r)}}\,\dd r
 +\frac{\dot r_i}{\sqrt{F_0(r_i)}}
 -\frac{\dot r_o}{\sqrt{F_0(r_o)}}.}
 \label{eq:W-response}
\end{equation}
The last two terms are required because the EWCS endpoints lie on RT surfaces whose turning radii move. The information responses are
\begin{equation}
 \boxed{
 \left.\partial_\alpha\frac{\DM}{c_0}\right|_0
 =\frac{2\dot W-\dot\cC}{6L}.}
 \label{eq:Delta-response}
\end{equation}
Equations~\eqref{eq:source-to-h}, \eqref{eq:length-response}, and~\eqref{eq:W-response} form a complete source-to-Markov-gap map for any weak numerical vortex profile.

\subsection{Transition shift}

Write $s_c(\alpha)=s_c^{(0)}+\alpha\dot s_c+O(\alpha^2)$. Differentiating $\cC(\ell,s_c;\alpha)=0$ and using $\partial_w\cL_0(w)=(r_+/L)\coth a_w$ gives
\begin{equation}
 \boxed{
 \dot s_c=
 \frac{\dot\cC(\ell,s_c^{(0)})}
 {(r_+/L)\left[
 \coth a_{s_c^{(0)}}+\coth a_{2\ell+s_c^{(0)}}
 \right]}.}
 \label{eq:sc-response}
\end{equation}

\section{Inverse-cube vortex tail}
\label{sec:inverse-cube}

For the source Eq.~\eqref{eq:inverse-cube-source}, $h_\pi=h_\star$ from Eq.~\eqref{eq:h-star}. With $q_w=r_+/r_t^{(0)}(w)=\tanh a_w$ and
\begin{equation}
 \mathfrak h(q,u)=h_\star(r_t/u)
 =\frac{q^2u^2}{2}
 +\frac{q^2u^2\log(qu)}{1-q^2u^2},
 \label{eq:h-qu}
\end{equation}
the one-dimensional response functions are (App.~\ref{app:kernels}):
\begin{align}
\frac{\dot\cL(w)}{L}&=-\int_0^1
 \frac{\mathfrak h(q_w,u)\sqrt{1-u^2}}
 {u\sqrt{1-q_w^2u^2}}\,\dd u,
 \label{eq:L-source}\\
 \frac{\dot W}{L}&=
 -\frac12\left[\mathfrak A(q_T)-\mathfrak A(q_s)\right]
 +\mathfrak E(q_s)-\mathfrak E(q_T),
 \label{eq:W-source}
\end{align}
where $\mathfrak A$ is the elementary primitive Eq.~\eqref{eq:A-primitive} and $\mathfrak E$ is the endpoint-motion contribution Eq.~\eqref{eq:E-endpoint}. Only smooth one-dimensional integrals remain.

\begin{figure*}[t]
 \centering
 \includegraphics[width=0.9\textwidth]{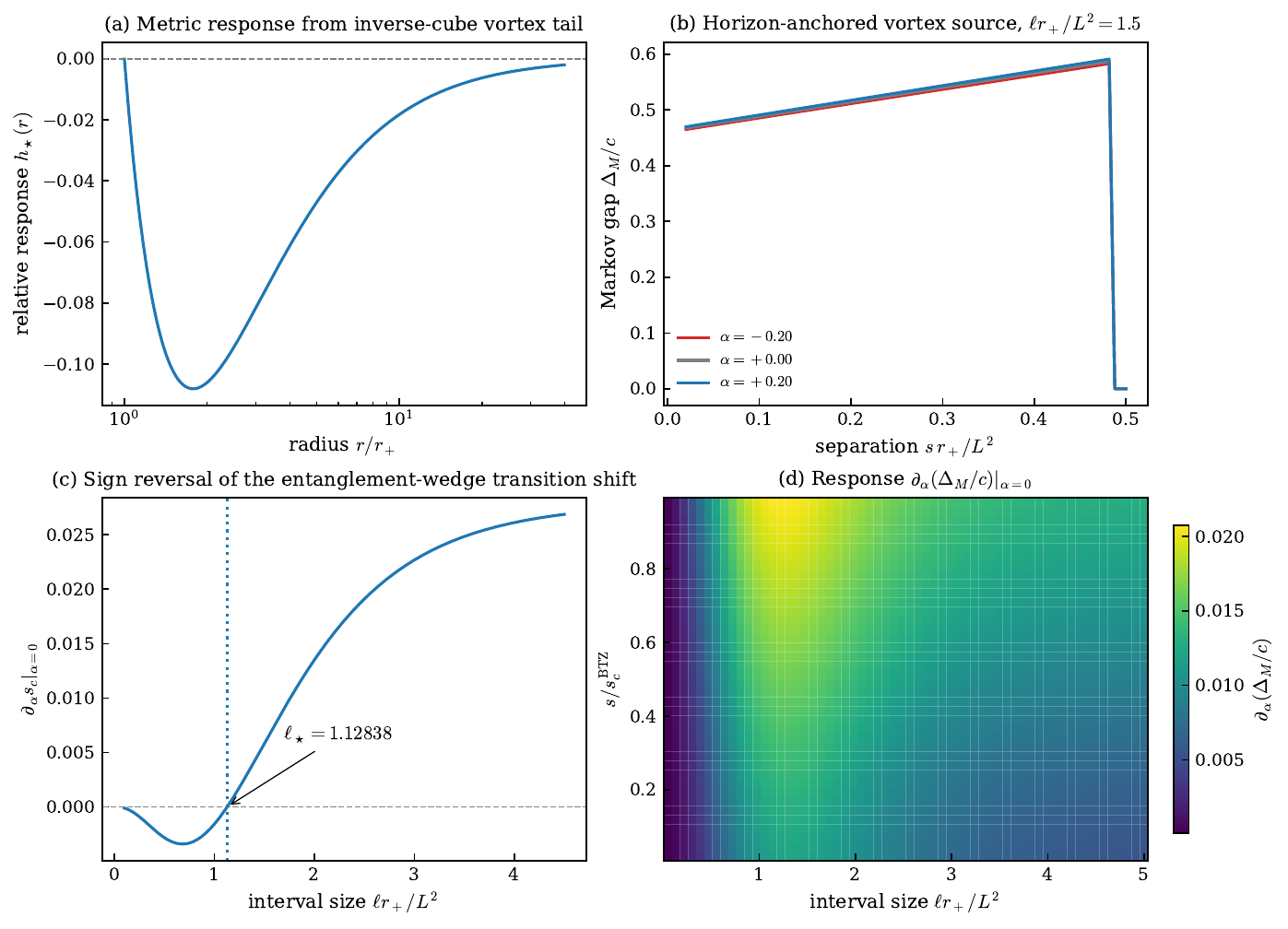}
 \caption{Complete geometric spine. (a) Relative metric response $h_\star=f_\star/F_0$ generated by the inverse-cube tail~\eqref{eq:inverse-cube-source}. It vanishes at the fixed horizon, is negative throughout the exterior, and approaches zero as $-\log(r/r_+)/r^2$ up to an overall constant. (b) Linearised Markov gap for $\ell r_+/L^2=1.5$ and three signs of $\alpha$. A positive horizon-anchored source increases the connected-phase gap; the sharp drop to zero marks the disconnected wedge. (c) First-order shift of the RT connectivity boundary as a function of interval size. The sign reverses at $\ell_\star r_+/L^2=1.128378\ldots$, computed to $10^{-10}$ from the exact metric~\eqref{eq:F-alpha-exact}. (d) Response $\partial_\alpha(\DM/c_0)|_{\alpha=0}$ across the connected wedge, positive throughout the sampled domain $0.05\le\ell r_+/L^2\le5$, $0.02\le s/s_c^{(0)}\le 0.98$.}
 \label{fig:geometric-spine}
\end{figure*}

Panels (b) and (d) of Fig.~\ref{fig:geometric-spine} confirm that
\begin{equation}
 \left.\partial_\alpha\frac{\DM}{c_0}\right|_0>0
 \label{eq:positive-response}
\end{equation}
throughout the sampled connected domain $0.05\le\ell r_+/L^2\le5$, $0.02\le s/s_c^{(0)}\le 0.98$. We verified the analytic derivative against centred finite differences of the exact metric Eq.~\eqref{eq:F-alpha-exact} to $10^{-6}$. Equation~\eqref{eq:positive-response} is a result for the inverse-cube representative, not a profile-independent positivity theorem. If the solution in winding sector $n$ has amplitude $\alpha_n$, then
\begin{equation}
 \DM^{(n)}=\DM^{(0)}
 +\alpha_n\left.\partial_\alpha\DM\right|_0
 +O(\alpha_n^2).
 \label{eq:Delta-n}
\end{equation}
The dependence on $n$ is mediated by $\alpha_n$ and, for the full solution, by changes in the shape of $\pi_n(r)$. Topology alone does not fix either.

\subsection{Sign reversal of the transition shift and interferometer gating}

Substitution of Eq.~\eqref{eq:L-source} into Eq.~\eqref{eq:sc-response} produces a dimensionless function of $\ell r_+/L^2$ with one nontrivial zero,
\begin{equation}
 \boxed{
 \frac{\ell_\star r_+}{L^2}=1.1283782224\ldots.}
 \label{eq:ell-star}
\end{equation}
For positive $\alpha$, $\dot s_c<0$ for $\ell<\ell_\star$ and $\dot s_c>0$ for $\ell>\ell_\star$: the same vortex tail destabilises the connected wedge for short intervals but stabilises it for long intervals. Panel (c) of Fig.~\ref{fig:geometric-spine} shows the sign reversal. Proposition~\ref{prop:gating} identifies the actual on/off gate as the RT boundary $s=s_c(\ell,\alpha)$; $\ell_\star$ determines only the direction in which the vortex shifts that gate. The sign reversal is a radial fingerprint rather than a topological invariant. Short geodesics emphasise the asymptotic tail, while long geodesics approach the horizon where $h_\star(r_+)=0$. The three lengths in $\cC$ combine those weights differently, allowing $\dot s_c$ to change sign even though $h_\star$ itself has a fixed sign.

\begin{figure*}[t]
 \centering
 \includegraphics[width=0.98\textwidth]{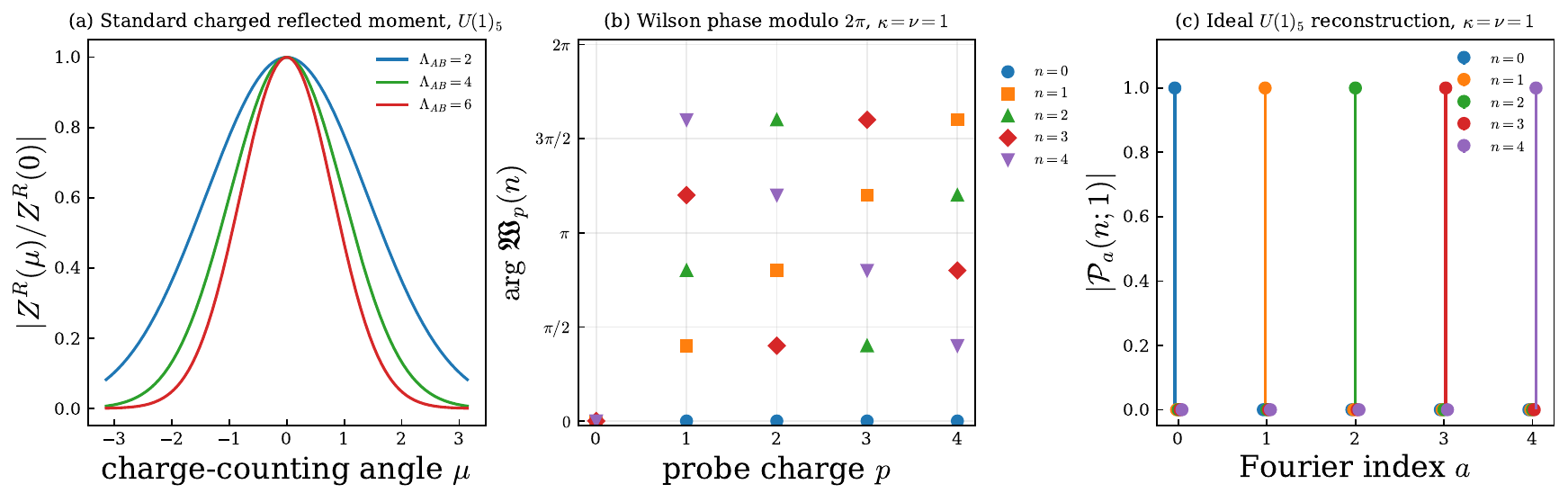}
 \caption{Topological-hair interferometer. (a) Modulus $|Z^R(\mu)/Z^R(0)|$ of the standard charged reflected moment Eq.~\eqref{eq:charged-reflected-gaussian} for a thermal $U(1)_5$ CFT at three interval scales. The modulus is Gaussian in the charge-counting angle $\mu$ with width set by $\Lambda_{AB}$; there is no winding-dependent phase in the standard charge-imbalance resolution. (b) Wilson-threaded phases $\arg\Wtop_p(n)$ modulo $2\pi$ for the ideal $U(1)_5$ probe sector with $\kappa=\nu=1$, $n=0,1,2,3,4$, and $p=0,\dots,4$. Smooth metric or core deformations cannot move these values while the topological sector is fixed. (c) Ideal discrete-Fourier reconstruction $\cP_a(n;1)$ of Eq.~\eqref{eq:sector-reconstruction}. For $\kappa=\nu=1$, the interferometer identifies $n\bmod k$ from $k-1$ nontrivial phase measurements. Standard symmetry-resolved reflected entropy, by contrast, is equipartitioned at leading order.}
 \label{fig:interferometer}
\end{figure*}

\section{Physical interpretation and boundary diagnostic}
\label{sec:interpretation}

\subsection{What the uncharged Markov gap measures}

The results separate three layers of information. The integer $n$ fixes the magnetic flux $2\pi n/e$. The coupled field equations map $n$ and the microscopic couplings to a radial source $\mathcal P_n(r)$. The gravitational constraint maps $\mathcal P_n(r)$ nonlocally to $F_n(r)$, after which the RT and EWCS kernels map $F_n$ to $\DM$:
\begin{equation}
 n\longrightarrow\mathcal P_n(r)
 \longrightarrow F_n(r)
 \longrightarrow\DM(A:B).
 \label{eq:causal-chain}
\end{equation}
This is not a universal topological relation: two vortex theories with the same $n$ but different core scales or couplings can produce different Markov gaps, and an ordinary matter profile can mimic the same geometric response. The superselection theorem gives an independent statement: a classical uncertainty over flux sectors contributes $H(p_n)$ to both $\SR$ and $\MI$, and hence not to $\DM$. For the Chern--Simons fixed-point states of Ref.~\cite{Berthiere2021}, the quantum-dimension terms also cancel. The uncharged Markov gap is therefore well suited to isolating geometric dressing from a decoupled topological sector, but it is not itself a winding meter.

\subsection{What ``flux-resolved'' can mean}

There are two inequivalent resolutions. The standard symmetry resolution of Sec.~\ref{sec:standard-resolution} conditions the reflected density matrix on the charge imbalance $q$ of the canonical purification. Its leading entropy is equipartitioned and $q$ is not the vortex winding. The topological resolution of Sec.~\ref{sec:wilson-interferometer} instead inserts a probe line and, for known $\kappa\nu$, uses its mutual-linking character to project onto the vortex sector $n\in\Zk$. The first construction is an entropy decomposition; the second is an interferometric sector measurement. Conflating them would lead to a spurious winding-dependent entropy. The positive universal quantity is the phase Eq.~\eqref{eq:topological-hair-phase}; the profile-dependent quantities are the modulus and the geometric Markov gap.

\subsection{Ensemble dependence}

Our calculation holds $r_+$ and the boundary interval coordinates fixed. The logarithmic term in Eq.~\eqref{eq:F-log-asymptotic} requires a subtraction scale, and varying $\alpha$ at fixed renormalised mass generally shifts $r_+$. The total response then contains
\begin{equation}
 \left.\frac{\dd\DM}{\dd\alpha}\right|_{M_{\rm ren}}
 =\left.\partial_\alpha\DM\right|_{r_+}
 +\left.\partial_{r_+}\DM\right|_\alpha
 \left.\frac{\dd r_+}{\dd\alpha}\right|_{M_{\rm ren}}.
 \label{eq:ensemble-chain}
\end{equation}
The first term is computed here. The second is an ensemble correction, not a new topological contribution.

\section{Condensed-matter bridge and measurement protocol}
\label{sec:measurement}

The separation developed above has a direct condensed-matter interpretation. The optimised Markov gap of Siva, Zou, Soejima, Mong, and Zaletel detects the minimal total central charge of an ungappable edge, $h_{\rm edge}=(c_+/3)\log2$, with numerical confirmation on Hofstadter Chern insulators achieving $|C|/3\,\log 2$ within $\sim 0.1\%$ using quadratic covariance-matrix disentanglers of radius $R$ optimised at each trisection~\cite{SivaEtAl2022}. This invariant diagnoses the topological phase but does not identify an Abelian anyon sector. Equation~\eqref{eq:topological-hair-phase} supplies the complementary information: the Markov gap reads edge content, while the Wilson-threaded phase reads the flux or anyon label. It is important not to identify the vortex winding $n$ with the Chern number $C$: the former labels an excitation sector, the latter characterises the bulk phase.

A minimal experimental protocol requires no full state tomography.
\begin{enumerate}
 \item Prepare an Abelian topological state and create a localised flux or anyon of sector $n$ inside the contour associated with the reflected partition. Programmable Rydberg arrays have already measured nonlocal string or loop operators in topologically ordered states~\cite{SemeghiniEtAl2021}.
 \item For each probe charge $p$, measure the generalised Wilson-loop amplitude in the vortex sector and in the reference sector without the flux defect. A controlled-loop Hadamard test gives
 \begin{equation}
  \mathcal A_p^{(n)}=\langle X_{\rm a}\rangle_n+i\langle Y_{\rm a}\rangle_n
  =\langle W_p(\Gamma_R)\rangle_n,
  \qquad
  \Wtop_p(n)=\frac{\mathcal A_p^{(n)}}{\mathcal A_p^{(0)}}.
  \label{eq:ancilla-readout}
 \end{equation}
 The calibrated ratio implements the normalisation in Eq.~\eqref{eq:normalized-wilson-reflected}. For a $\mathbb Z_2$ implementation, one nontrivial loop setting is sufficient and the phase is $(-1)^{\kappa pn\nu}$.
 \item Repeat for $p=0,\ldots,k-1$ and apply Eq.~\eqref{eq:sector-reconstruction}. This reconstructs $n\bmod k$ when $\kappa\nu$ is known and coprime to $k$. The protocol requires $k-1$ nontrivial probe settings, one ancillary degree of freedom when controlled loops are used, and circuit depth proportional to the loop length rather than the Hilbert-space dimension.
 \item Independently estimate $\MI$ and a reflected-entropy or CCNR proxy using two-copy interference or randomised measurements~\cite{RathEtAl2023,GoldsteinSela2018}. The RT gate is located by varying $s$ across $s_c(\ell,\alpha)$, while the geometric sign reversal is tested independently by varying $\ell$ across Eq.~\eqref{eq:ell-star}.
\end{enumerate}
The topological claim is falsified if the measured phase varies continuously under a deformation that neither crosses the vortex nor changes the topological phase. The geometric claim is falsified if the calibrated transition shift does not change sign at Eq.~\eqref{eq:ell-star} within the weak-source regime. The conditional gating statement (Proposition~\ref{prop:gating}) is falsified if, within a fixed linked replica sector, the reflected probe phase remains nontrivial after the EWCS disappears or remains trivial while the connected linked contour is present. These are three separate tests, which is an advantage: a failure of the microscopic vortex model does not masquerade as a failure of topological braiding, and vice versa. Cold-atom quantum-field simulators have already measured mutual information for spatially extended regions~\cite{TajikEtAl2023}, while Rydberg arrays have prepared and diagnosed topological spin-liquid states~\cite{SemeghiniEtAl2021}. The proposed interferometer combines those capabilities with charged-replica logic.

\section{Discussion and conclusions}
\label{sec:conclusion}

The result is a controlled separation theorem and a corresponding interferometer with a single geometric bridge. The uncharged Markov gap is blind to the Shannon uncertainty of superselection sectors (Proposition~\ref{prop:cancellation}); for Chern--Simons fixed-point states, the universal quantum-dimension contribution cancels as well. Standard $U(1)$ symmetry-resolved reflected entropy does not repair this limitation: the reflected charge imbalance is not the vortex winding, its expectation value in the canonical purification vanishes, and the adjacent-interval charged replica gives equipartition Eqs.~\eqref{eq:reflected-equipartition-renyi}--\eqref{eq:reflected-equipartition-vn}. A minimal probe $U(1)_k$ topological completion coupled to the compact vortex-flux class produces the observable $\Wtop_p(n)$ with the exact phase Eq.~\eqref{eq:topological-hair-phase} in a fixed linking sector. Its discrete Fourier transform reconstructs the winding sector modulo $k$ when $\gcd(\kappa\nu,k)=1$. Proposition~\ref{prop:gating} ties the on/off readout to the RT connectivity boundary $s=s_c(\ell,\alpha)$, while the independent scale $\ell_\star r_+/L^2=1.128378\ldots$ governs the sign of the vortex-induced displacement of that boundary.

The metric channel remains fully calculable. The radial Einstein equation must be solved with the integrating factor Eq.~\eqref{eq:F-exact-general}; at fixed horizon radius it gives $F'(r_+)=2r_+/L^2$. All same-boundary RT surfaces and the EWCS remain outside the horizon. The complete first variation includes the motion of the EWCS endpoints. For the inverse-cube source, the Markov gap increases over the connected domain sampled.

The central physical statement is
\begin{center}
 \fbox{\parbox{0.91\linewidth}{\centering
 The Wilson-threaded reflected phase reads topological hair in a specified linking sector, while the charged-moment modulus and the ordinary Markov gap read gravitational dressing. The RT transition $s=s_c(\ell,\alpha)$ is the gate; $\ell_\star r_+/L^2=1.128378\ldots$ fixes the direction of its vortex-induced shift.}}
\end{center}
This phase--modulus split is stronger than assigning a profile-dependent entropy correction to a vortex. It is universal in the topological sector, calculable in the gravitational sector, and directly translatable to an interferometric measurement. Recent multipartite Markov-gap and reflected-multi-entropy constructions suggest natural extensions to several reflected probe contours~\cite{IizukaEtAl2025,YuanLiZhou2024,BalasubramanianEtAl2025}; the bipartite result is complete without them. A next step is to solve the full coupled vortex profiles and to implement the same two-channel analysis in a fractional Chern or lattice gauge model. No universal winding-dependent entropy is assumed in that extension; the topological sector is read by braiding data, while entropy continues to diagnose geometry and edge structure.

\bibliographystyle{apsrev4-2}
\bibliography{ref_VBH}
\appendix
\onecolumngrid

\section{Gaussian Fourier transform and equipartition}
\label{app:equipartition}

Write the normalised charged reflected moment as $f_r(\mu)=\exp(-a_r\mu^2)$ with $a_r=k\Lambda_{AB}/(4\pi^2r)$. In the broad-distribution scaling regime,
\begin{align}
 P_r(q)&=\int_{-\infty}^{\infty}\frac{\dd\mu}{2\pi}
 e^{-iq\mu-a_r\mu^2}
 =\frac{1}{2\sqrt{\pi a_r}}
 \exp\!\left(-\frac{q^2}{4a_r}\right)\notag\\
 &=\sqrt{\frac{\pi r}{k\Lambda_{AB}}}
 \exp\!\left[-\frac{\pi^2rq^2}{k\Lambda_{AB}}\right].
\end{align}
The resolved R\'enyi entropy satisfies
\begin{equation}
 S_r^R(q)=S_r^R+\frac{1}{1-r}
 \left[\log P_r(q)-r\log P_1(q)\right].
\end{equation}
The terms proportional to $q^2$ cancel identically. The remainder is
\begin{equation}
 \log P_r-r\log P_1
 =\frac12\log r+\frac{1-r}{2}\log\frac{\pi}{k\Lambda_{AB}},
\end{equation}
which proves Eq.~\eqref{eq:reflected-equipartition-renyi}. Taking $r\to1$ and using $\lim_{r\to1}\log r/(1-r)=-1$ gives Eq.~\eqref{eq:reflected-equipartition-vn}. We verified these identities to machine precision.

\section{Chern--Simons derivation of the mutual-linking phase}
\label{app:CS-linking}

Equation~\eqref{eq:topological-hair-phase} is a special case of the standard Abelian $U(1)_k$ Chern--Simons partition function on a link~\cite{Witten1989,Polychronakos1989}. On a manifold with boundary, including the solid-torus BTZ filling, the absolute amplitude also depends on the chosen boundary state and on flat holonomies around noncontractible cycles. Equation~\eqref{eq:normalized-wilson-reflected} is therefore defined with identical boundary topological data and identical framing in all four partition functions. We further take the two linked insertions to lie in a contractible three-ball neighbourhood of the reflected filling, or equivalently condition on the same noncontractible holonomies. Under these restrictions the global solid-torus factors cancel and the remaining ratio is the local mutual-linking invariant.

For a collection of oriented Wilson lines $C_i$ with integer charges $q_i$, write the conserved two-form current $J_i$ such that $\int_{C_i}a=\int_{\mathcal M}a\wedge J_i$. The Euclidean action is
\begin{equation}
 I[a;J]=\frac{ik}{4\pi}\int a\wedge\dd a
 +i\sum_iq_i\int a\wedge J_i,
\end{equation}
with field equation $(k/2\pi)\dd a+\sum_iq_iJ_i=0$. Choose one-forms $\eta_i$ with $\dd\eta_i=J_i$ in the contractible neighbourhood specified above; this is the required Seifert-surface data. A saddle is $a_*=-(2\pi/k)\sum_iq_i\eta_i$. Substituting and evaluating gives, up to a convention-dependent orientation sign,
\begin{equation}
 \log Z[J]=\frac{2\pi i}{k}
 \sum_{i<j}q_iq_j\Lk(C_i,C_j)
 +\frac{\pi i}{k}\sum_iq_i^2\,\mathrm{SL}(C_i),
\end{equation}
where $\mathrm{SL}$ is the framing-dependent self-linking number. The ratio Eq.~\eqref{eq:normalized-wilson-reflected} cancels self-linking and single-line factors and leaves the mutual Gauss-linking number within the fixed boundary sector. Taking $C_1=\Gamma_R$, $q_1=p$, $C_2=\cV_n$, and $q_2=\kappa n$ yields Eq.~\eqref{eq:topological-hair-phase}. We verified numerically that the Gauss-linking integral for a standard Hopf link geometry gives $\Lk=\pm1$.

\section{Integrating-factor expansion}
\label{app:integrating-factor}

For $F'+\alpha\pi F=2r/L^2$ with $F(r_+)=0$, write $F=F_0+\alpha f+O(\alpha^2)$: the zeroth and first orders are $F_0'=2r/L^2$, $F_0(r_+)=0$ and $f'=-\pi F_0$, $f(r_+)=0$, proving Eq.~\eqref{eq:source-to-metric}. For $\pi=r_+^2/r^3$,
\begin{equation}
 f(r)=-\frac{r_+^2}{L^2}
 \left[
 \log\frac{r}{r_+}+\frac{r_+^2}{2r^2}-\frac12
 \right],
\end{equation}
which is Eq.~\eqref{eq:f-star}. For the exact source, Eq.~\eqref{eq:F-exact-general} contains
\begin{equation}
 \int u e^{-a/u^2}\dd u
 =\frac12\left[u^2e^{-a/u^2}+a\Ei(-a/u^2)\right],
\end{equation}
giving Eq.~\eqref{eq:F-alpha-exact}.

\section{Derivation of the geodesic equations}
\label{app:geodesics}

Let $\mathscr L=\sqrt{r'^2/F(r)+r^2/L^2}$. Because $\mathscr L$ has no explicit $x$-dependence, $\mathscr L-r'\partial_{r'}\mathscr L=r^2/(L^2\mathscr L)=r_t/L$ at $r'=0$. Solving for $r'$ yields Eq.~\eqref{eq:dxdr}. The on-shell line element obeys $\dd s/\dd r=r/[\sqrt{F(r)}\sqrt{r^2-r_t^2}]$, so the regulated length is $\cL(r_{\max})=2\int_{r_t}^{r_{\max}}r\,\dd r/[\sqrt{F(r)}\sqrt{r^2-r_t^2}]$. Changing variables to $u=r_t/r$ and subtracting the asymptotic $L/u$ divergence gives Eq.~\eqref{eq:length-ren}. For two equal intervals, the geometry is invariant under reflection about the midpoint. The minimal separating curve is radial and meets the two symmetric RT geodesics at their turning points, giving Eq.~\eqref{eq:W-general}.

\section{BTZ cross-ratio formulas}
\label{app:BTZ}

For $F_0=(r^2-r_+^2)/L^2$, Eq.~\eqref{eq:width-r} gives $w/(2L)=(L/r_+)\arctanh(r_+/r_t)$, so
\begin{equation}
 r_t^{(0)}(w)=r_+\coth a_w,\qquad a_w=\frac{r_+w}{2L^2}.
 \label{eq:rt-BTZ}
\end{equation} The finite interval length has the form $\cL_0(w)=2L\log\sinh a_w+\text{const}$, so $\cC_0/(2L)=\log\zeta$. The BTZ cross section is
\begin{align}
 W_0&=L\log\frac{
 r_t(s)+\sqrt{r_t(s)^2-r_+^2}}
 {r_t(2\ell+s)+\sqrt{r_t(2\ell+s)^2-r_+^2}},
\end{align}
and hyperbolic identities give Eq.~\eqref{eq:SR-BTZ}.

\section{Inverse-cube response kernels}
\label{app:kernels}

For $r=r_t/u$ and $q=r_+/r_t$, Eq.~\eqref{eq:h-star} becomes Eq.~\eqref{eq:h-qu}. Moreover, $\sqrt{F_0(r_t/u)}=(r_+/L)\sqrt{1-q^2u^2}/(qu)$. Substitution into the definition of $\mathcal J_\pi$ gives
\begin{equation}
 \mathcal J_\pi(r_t)=\frac{L}{r_+}\mathfrak J(q),
 \quad
 \mathfrak J(q)=\int_0^1
 \frac{qu\,\mathfrak h(q,u)}
 {\sqrt{1-u^2}\sqrt{1-q^2u^2}}\,\dd u,
\end{equation}
yielding
\begin{equation}
 \dot r_t(w)=-\frac{r_+(1-q_w^2)}{2q_w^2}
 \mathfrak J(q_w).
 \label{eq:rt-source-app}
\end{equation}
For the direct EWCS term, set $z=r_+/r$. Then
\begin{equation}
 \mathfrak A(z)=
 -\frac12\sqrt{1-z^2}
 +\frac{\log z}{\sqrt{1-z^2}}
 -\log\frac{z}{1+\sqrt{1-z^2}},
 \label{eq:A-primitive}
\end{equation}
with $\mathfrak A'(z)=h_\star(r_+/z)/[z\sqrt{1-z^2}]$, and the endpoint contribution is
\begin{equation}
 \mathfrak E(q)=-\frac{\sqrt{1-q^2}}{2q}\mathfrak J(q),
 \label{eq:E-endpoint}
\end{equation}
which is the finite version of $\dot r_t/\sqrt{F_0(r_t)}$. These give Eqs.~\eqref{eq:L-source} and~\eqref{eq:W-source}.

\end{document}